
\documentclass[11pt]{article}%
\usepackage{amsmath}
\usepackage{amsfonts}
\usepackage{amssymb}
\usepackage{graphicx}
\usepackage{fullpage}%
\setcounter{MaxMatrixCols}{30}
\providecommand{\U}[1]{\protect\rule{.1in}{.1in}}
\newtheorem{theorem}{Theorem}

\newtheorem{claim}[theorem]{Claim}

\newtheorem{definition}[theorem]{Definition}

\newtheorem{lemma}[theorem]{Lemma}

\newtheorem{proposition}[theorem]{Proposition}

\newenvironment{proof}[1][Proof]{\noindent\textbf{#1.} }{\ \rule{0.5em}{0.5em}}
\begin{document}

\title{On Perfect Completeness for QMA}
\author{Scott Aaronson\thanks{Email: aaronson@csail.mit.edu. \ Supported by MIT\ CSAIL
and the Keck Foundation.}\\MIT}
\date{}
\maketitle

\begin{abstract}
Whether the class\ $\mathsf{QMA}$\ (Quantum Merlin Arthur)\ is equal to
$\mathsf{QMA}_{1}$, or $\mathsf{QMA}$\ with one-sided error,\ has been an open
problem for years. \ This note helps to explain why the problem is
difficult,\ by using ideas from real analysis to give a \textquotedblleft
quantum oracle\textquotedblright\ relative to which $\mathsf{QMA}%
\neq\mathsf{QMA}_{1}$. \ As a byproduct, we find that there are facts about
quantum complexity classes that are classically relativizing but not quantumly
relativizing, among them such \textquotedblleft trivial\textquotedblright%
\ containments as $\mathsf{BQP}\subseteq\mathsf{ZQEXP}$.

\end{abstract}

\section{Introduction\label{INTRO}}

The complexity class $\mathsf{MA}$\ (Merlin-Arthur) was introduced by Babai
\cite{babai:am2}\ in\ 1985. \ Intuitively, $\mathsf{MA}$\ is a probabilistic
version of $\mathsf{NP}$; it contains all problems for which an omniscient
wizard Merlin can convince a probabilistic polynomial-time verifier Arthur of
a \textquotedblleft yes\textquotedblright\ answer, by a one-round protocol in
which Merlin sends Arthur a purported proof $z$, and then Arthur checks $z$.
\ In the usual definition, if the answer to the problem is \textquotedblleft
yes\textquotedblright\ then there should exist a string $z$ that makes Arthur
accept with probability at least $2/3$ (this property is called
\textit{completeness}), while if the answer is \textquotedblleft
no\textquotedblright\ then no $z$ should make Arthur accept with probability
more than $1/3$ (this property is called \textit{soundness}).

One of the first questions people asked about $\mathsf{MA}$\ was whether it
can be made to have \textit{perfect completeness} (also called
\textit{one-sided error}): that is, whether the $2/3$ in the above definition
can be replaced by $1$. \ In other words, can we assume without loss of
generality that \textit{Arthur never rejects a valid proof?} \ This question
was answered in the affirmative by Zachos and F\"{u}rer \cite{zf}, using a
technique introduced earlier by Lautemann \cite{lautemann}\ to show
$\mathsf{BPP}\subseteq\mathsf{\Sigma}_{\mathsf{2}}^{\mathsf{P}}$\ (for a
different proof see Goldreich and Zuckerman \cite{gz}).

A decade ago, Kitaev \cite{ksv} and Watrous \cite{watrous} introduced a
quantum analogue of $\mathsf{MA}$, called $\mathsf{QMA}$\ (Quantum Merlin
Arthur). \ Loosely speaking, $\mathsf{QMA}$\ is the same as $\mathsf{MA}$,
except that the verifier Arthur is a polynomial-time \textit{quantum}
algorithm, and the proof sent by Merlin is a quantum state $\left\vert
\psi\right\rangle $\ with polynomially many qubits. \ We know a reasonable
amount about $\mathsf{QMA}$ (see Aharonov and Naveh \cite{an}\ for a survey).
\ Like $\mathsf{MA}$, for example, $\mathsf{QMA}$ allows exponential
amplification of completeness and soundness \cite{mw}, is contained in
$\mathsf{PP}$\ \cite{mw}, and has natural complete promise problems \cite{ksv}.

However, the question of whether $\mathsf{QMA}$\ can be made to have perfect
completeness has resisted attack. \ At first a mere nuisance, this question
has increasingly cropped up in quantum complexity theory. \ For example, two
years ago Bravyi \cite{bravyi}\ defined a quantum analogue of the $k$-SAT
problem, and showed it complete for the complexity class $\mathsf{QMA}_{1}$,
meaning $\mathsf{QMA}$\ with one-sided error. \ But showing quantum
$k$-SAT\ is $\mathsf{QMA}$-complete would require further showing that
$\mathsf{QMA}_{1}=\mathsf{QMA}$, or equivalently, that $\mathsf{QMA}%
$\ protocols can be made to have perfect completeness. \ What makes the
situation even stranger is that, if we allow multiple rounds of interaction
between the prover and verifier (yielding the class $\mathsf{QIP}$), then
quantum interactive proof systems \textit{can} be made to have perfect
completeness \cite{kitaevwatrous}.

In this note we help explain this puzzling state of affairs, by giving a
\textit{quantum oracle}\ $\mathcal{U}$ relative to which $\mathsf{QMA}%
_{1}^{\mathcal{U}}\neq\mathsf{QMA}^{\mathcal{U}}$. \ A quantum oracle, as
defined by Aaronson and Kuperberg \cite{ak}, is simply a unitary
transformation on quantum states that can be applied in black-box fashion.
\ Our result implies that there is no \textquotedblleft black-box
method\textquotedblright\ to convert $\mathsf{QMA}$\ protocols into
$\mathsf{QMA}_{1}$ protocols, in the same sense that there are black-box
methods to convert $\mathsf{MA}$ protocols into $\mathsf{MA}_{1}$
protocols,\ and to convert $\mathsf{QMA}$\ protocols\ into $\mathsf{QMA}%
$\ protocols with exponentially small error. \ If a proof of $\mathsf{QMA}%
_{1}=\mathsf{QMA}$ exists, it will instead have to use \textquotedblleft
quantumly nonrelativizing techniques\textquotedblright: techniques that are
sensitive to the presence of quantum oracles.

Somewhat surprisingly, our separation proof has almost nothing to do with
complexity theory, and instead hinges on real analysis. \ Our oracle will act
on just a single qubit, and will rotate that qubit by an angle $\theta$\ that
is either $0$ or far from zero. \ We will show that any $\mathsf{QMA}_{1}%
$\ protocol to convince a time-bounded verifier that $\theta\neq0$, using any
finite-sized quantum proof, would lead to a matrix $E\left(  \theta\right)
$\ that depends analytically on $\theta$, yet whose maximum eigenvalue has the
\textquotedblleft piecewise\textquotedblright\ behavior shown in Figure
\ref{eigenceil}. \ We will then use results from real analysis to show that
this behavior cannot occur.%

\bigskip

\bigskip

\begin{center}
\includegraphics[
natheight=2.008100in,
natwidth=3.011300in,
height=2.0081in,
width=3.0113in
]%
{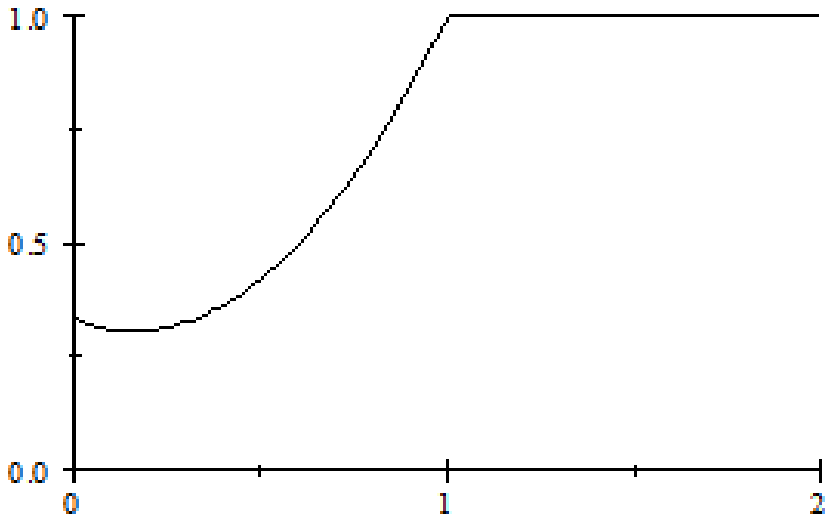}%
\\
Figure 1: As we vary $\theta$, the largest eigenvalue $a\left(  \theta\right)
$ of the matrix $E\left(  \theta\right)  $\ must start out small, but then
\textquotedblleft plateau\textquotedblright\ at $a\left(  \theta\right)  =1$.
\ We will show that this contradicts the analyticity of $E\left(
\theta\right)  $.
\label{eigenceil}%
\end{center}

Since our argument does not depend on the running time of the $\mathsf{QMA}%
_{1}$ machine (so long as it is finite), the same argument will yield quantum
oracles $\mathcal{U}$\ such that $\mathsf{BQP}^{\mathcal{U}}\not \subset
\mathsf{QMA}_{1}\mathsf{EXP}^{\mathcal{U}}$ and even $\mathsf{BQP}%
^{\mathcal{U}}\not \subset \mathsf{QMA}_{1}\mathsf{EEEXP}^{\mathcal{U}}$.
\ This, in turn, has a somewhat surprising implication: that quantum oracles
can invalidate even complexity class containments that hold for
\textquotedblleft trivial\textquotedblright\ reasons in the unrelativized
world. \ In particular, we will argue that there are extremely simple proof
techniques---including the representation of quantum amplitudes by explicit
sequences of bits---that are classically relativizing but not quantumly
relativizing (at least when applied to one-sided-error complexity classes).
\ Unfortunately, knowing this does not by itself seem to help in finding a
proof that $\mathsf{QMA}_{1}=\mathsf{QMA}$.

Some argue that any quantum complexity class involving perfect completeness is
\textquotedblleft inherently unphysical,\textquotedblright\ and we do not wish
to dispute this. \ Indeed, our results could even be taken as further evidence
for this point of view. \ On the other hand, classes like $\mathsf{QIP}$\ and
$\mathsf{QMA}$\ could also be seen as \textquotedblleft
unphysical\textquotedblright\ (since there are no Merlins), yet few quantum
computing researchers would deny that their study has led to major insights.
\ On a related topic, let us stress that our result does not depend on
restricting the set of gates available to the $\mathsf{QMA}_{1}$\ machine: it
works assuming \textit{any} countable set of gates. \ The key issue, then, is
not any limitation of the gate basis, but simply the underlying requirement of
perfect completeness.

The rest of the paper is organized as follows. \ Section \ref{PRELIM}\ reviews
some preliminaries from complexity theory and real analysis, Section
\ref{RESULT} proves the main result, Section \ref{DISC}\ discusses the
implications for quantum oracles, and Section \ref{EXT}\ concludes with some
extensions and open problems.

\section{Preliminaries\label{PRELIM}}

In what follows, we assume familiarity with standard complexity classes such
as $\mathsf{QMA}$\ and $\mathsf{MA}$. \ See the Complexity
Zoo\footnote{http://www.complexityzoo.com} for definitions. \ For
completeness, we now define the class $\mathsf{QMA}_{1}$, or $\mathsf{QMA}%
$\ with one-sided error.

\begin{definition}
A language $L\subseteq\left\{  0,1\right\}  ^{\ast}$\ is in $\mathsf{QMA}_{1}$
if\ there exists a uniform polynomial-size quantum circuit family $\left\{
C_{n}\right\}  _{n\geq1}$, and a polynomial $p$, such that for all inputs
$x\in\left\{  0,1\right\}  ^{n}$:

\begin{itemize}
\item (Perfect Completeness) If $x\in L$, then there exists a $p\left(
n\right)  $-qubit quantum witness $\left\vert \varphi\right\rangle $ such that
$C_{n}\left(  x,\left\vert \varphi\right\rangle \right)  $\ accepts with certainty.

\item (Constant Soundness) If $x\notin L$, then $C_{n}\left(  x,\left\vert
\varphi\right\rangle \right)  $\ accepts with probability at most $1/2$\ for
all $\left\vert \varphi\right\rangle $.
\end{itemize}
\end{definition}

One can similarly define $\mathsf{MA}_{1}$ as the class of languages $L$\ for
which there exists a randomized polynomial-time algorithm $A$ such that for
all inputs $x\in\left\{  0,1\right\}  ^{n}$, if $x\in L$\ then there exists a
witness $w\in\left\{  0,1\right\}  ^{p\left(  n\right)  }$\ such that
$A\left(  x,w\right)  $\ accepts with certainty, while if $x\notin L$\ then
$A\left(  x,w\right)  $\ accepts with probability at most $1/2$\ for all $w$.
\ As mentioned before, the result of Zachos and F\"{u}rer \cite{zf} implies
that $\mathsf{MA}_{1}=\mathsf{MA}$, whereas we do not know whether
$\mathsf{QMA}_{1}=\mathsf{QMA}$.

Note that, because of the perfect completeness condition, the definition of
$\mathsf{QMA}_{1}$\ might depend on the particular basis of gates used to
generate $C_{n}$. \ Indeed, the natural way to show that $\mathsf{QMA}_{1}%
$\ does \textit{not} depend of the basis of gates would presumably be to show
that $\mathsf{QMA}_{1}=\mathsf{QMA}$, the very task for which we are pointing
out an obstacle! \ For our purposes, though, we can take \textit{any}
countable set of 1- and 2-qubit gates as the gate basis of the $\mathsf{QMA}%
_{1}$\ machine, regardless of how many bits are needed to describe those
gates. \ For example, we could take the set of all 1- and 2-qubit gates that
are computably describable. \ Our separation results will still go through,
and such an assumption can only make our results stronger. \ (Furthermore, the
only reason the gate basis needs to be countable is so that a diagonalization
argument will go through. \ If the quantum oracle $U$ could be chosen
\textit{subsequent} to the choice of $\mathsf{QMA}_{1}$\ machine, then we
could even handle 1- and 2-qubit gates with arbitrary complex-valued
transition probabilities.)

Following Aaronson and Kuperberg \cite{ak}, we define a quantum oracle
$\mathcal{U}$\ to be simply a collection of unitary operations $\left\{
U_{n}\right\}  _{n\geq1}$, where each $U_{n}$\ acts on some number of qubits
$q\left(  n\right)  $\ (in this note $q\left(  n\right)  $ will always be
$1$). \ Let $\mathsf{C}$\ be a quantum complexity class. \ Then by
$\mathsf{C}^{\mathcal{U}}$, we mean the class of problems solvable by a
$\mathsf{C}$\ machine that can, at any time step, apply any $U_{n}%
\in\mathcal{U}$\ to any subset of its qubits at unit cost. \ While this is
admittedly an informal definition, for any $\mathsf{C}$ of interest to us it
is easy to give a reasonable formalization. \ While there \textit{are}
ambiguities in defining $\mathsf{C}^{\mathcal{U}}$, none of those ambiguities
will turn out to matter for us. \ For example, we can assume (if we like) that
a $\mathsf{C}^{\mathcal{U}}$ machine\ is also able to apply $U_{n}^{-1}$\ and
controlled-$U_{n}$, possibly with different values of $n$ in different
branches of a superposition. \ None of these decisions will affect our results.

We now turn to reviewing some facts from real analysis. \ Recall that a
function $f:\mathbb{R}\rightarrow\mathbb{R}$\ is called \textit{real analytic}
if for every $x_{0}\in\mathbb{R}$, the Taylor series about $x_{0}$ is
convergent and equal to $f\left(  x\right)  $\ for all $x$ close enough to
$x_{0}$. \ Every real analytic function is smooth, but the converse does not hold.

We will need the following theorem of Alekseevsky et al.\ (Theorem 5.1 in
\cite{aklm}):

\begin{theorem}
[\cite{aklm}]\label{aklmthm}Let%
\[
p\left(  \theta\right)  \left(  x\right)  =b_{0}\left(  \theta\right)
+b_{1}\left(  \theta\right)  x+b_{2}\left(  \theta\right)  x^{2}+\cdots
+b_{N}\left(  \theta\right)  x^{N}%
\]
be a real polynomial in $x$ with all real roots, parameterized by $\theta
\in\mathbb{R}$. \ Suppose the coefficients $b_{0}\left(  \theta\right)
,\ldots,b_{N}\left(  \theta\right)  $\ are all real analytic functions of
$\theta$. \ Then there exist real analytic functions $\lambda_{1}\left(
\theta\right)  ,\ldots,\lambda_{N}\left(  \theta\right)  $\ such that
$\left\{  \lambda_{1}\left(  \theta\right)  ,\ldots,\lambda_{N}\left(
\theta\right)  \right\}  $\ is the set of roots of $p\left(  \theta\right)
\left(  x\right)  $\ for all $\theta\in\mathbb{R}$.
\end{theorem}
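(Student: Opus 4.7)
The plan is to establish local real-analyticity of each root via Puiseux expansions of the complex algebraic curve $p(\theta)(x)=0$ near each $(\theta_0,\lambda_0)$, and then patch the local parameterizations together using the simple connectedness of $\mathbb{R}$.

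\textbf{Local step.} Fix $\theta_0\in\mathbb{R}$, let $\lambda_0$ be a root of $p(\theta_0)(x)$ of multiplicity $m$, and apply Puiseux's theorem. The $m$ roots of $p(\theta)(x)$ near $\lambda_0$ split into cycles of lengths $k_1+\cdots+k_r=m$, and within a cycle of length $k$ the roots admit a convergent expansion
\[
\lambda_j(\theta) = \lambda_0 + \sum_{l\geq 1} c_l\,\zeta^{jl}\,(\theta-\theta_0)^{l/k}, \qquad j=0,1,\ldots,k-1,
\]
where $\zeta = e^{2\pi i/k}$. The all-real-roots hypothesis now supplies the key input. Reality of all $k$ branches for $\theta$ slightly greater than $\theta_0$ forces each $c_l\zeta^{jl}$ to be real: setting $j=0$ gives $c_l\in\mathbb{R}$, and setting $j=1$ then forces $\zeta^l\in\{\pm 1\}$ whenever $c_l\neq 0$. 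Reality on $\theta<\theta_0$, where $(\theta-\theta_0)^{1/k}$ picks up the additional phase $e^{i\pi/k}$, further demands $e^{i\pi l/k}\in\mathbb{R}$, i.e.\ $k\mid l$, for every $l$ with $c_l\neq 0$. Hence $c_l=0$ whenever $k\nmid l$: each Puiseux series reduces to an honest convergent power series in $\theta-\theta_0$, and every root of $p(\theta_0)(x)$ extends to a real analytic function on a real neighborhood of $\theta_0$.

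\textbf{Global step.} Let $D\subset\mathbb{R}$ be the zero set of the discriminant of $p(\theta)$ (after factoring out any globally repeated factor, so that the discriminant is not identically zero); it is a discrete subset of $\mathbb{R}$. On each component of $\mathbb{R}\setminus D$ the roots are simple, and the implicit function theorem provides $N$ locally analytic root functions. The local extensions from the previous step uniquely bridge these labelings across each $\theta_0\in D$: the analytic extensions from the left and right must agree with the Puiseux-derived extensions through $\theta_0$. Since $\mathbb{R}$ is simply connected, the monodromy of analytic continuation is trivial, so one obtains $N$ globally defined real analytic functions $\lambda_1(\theta),\ldots,\lambda_N(\theta)$ whose multiset equals the roots of $p(\theta)(x)$ for every $\theta\in\mathbb{R}$.

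The main obstacle is the local step: killing the fractional Puiseux exponents is precisely where the reality hypothesis is exploited, and the elementary example $p(\theta)(x)=x^2-\theta$ (whose roots $\pm\sqrt{\theta}$ form a genuine $k=2$ cycle) shows that without it the conclusion would fail. Once the local analyticity is established, the global patching across the discrete set $D$ is essentially bookkeeping, as the local extensions are unique and any two labelings on an overlap can be reconciled by a single consistent global relabeling.
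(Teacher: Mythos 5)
The paper never proves this statement: it is imported wholesale as Theorem 5.1 of Alekseevsky et al.\ \cite{aklm} (the real analytic case is classical and goes back to Rellich), so there is no internal argument to compare yours against. Your Puiseux route is in fact the standard proof of this result, and its core is sound: hyperbolicity on \emph{both} sides of $\theta_0$ is exactly what kills the fractional exponents. Note that you implicitly invoke uniqueness of the coefficients of a fractional power series to pass from ``the branch is real'' to ``each term is real,'' and that the $j=0$ branch alone already suffices ($c_l\in\mathbb{R}$ from $\theta>\theta_0$, then $c_l\sin(\pi l/k)=0$ from $\theta<\theta_0$); your $j=1$ step is harmless but redundant. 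After the reduction, each length-$k$ cycle collapses to a single analytic function counted with multiplicity $k$, which is the right picture.

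Two loose ends should be tightened. First, as literally stated (non-monic $p$) the theorem is false---take $p(\theta)(x)=\theta x-1$, whose only root escapes to infinity as $\theta\to 0$---and your application of Puiseux and your accounting of ``all $N$ roots near $\theta_0$'' tacitly assume the degree does not drop. This is harmless for the paper, where $p$ is a characteristic polynomial and hence monic, but the hypothesis ($b_N$ nonvanishing, or monicity) should be made explicit. Second, the globalization is the weakest part of your write-up: when every $p(\theta)$ has a repeated root the discriminant vanishes identically, and ``factoring out a globally repeated factor'' is not something you have justified---a priori you do not know that a repeated factor can be chosen with globally real analytic coefficients; that is essentially an instance of the theorem being proved. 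The clean fix avoids the discriminant entirely: your local step already produces, on a neighborhood of \emph{every} $\theta_0$, $N$ real analytic functions listing the roots with multiplicity; glue these along $\mathbb{R}$ by matching the two lists on each overlap interval, using the identity theorem (evaluate at a point of the overlap where all pairwise non-identical branches take distinct values) to see that the matching is a genuine permutation of germs, and then exhaust $\mathbb{R}$ by compact intervals. With those repairs the proof is complete and is essentially Rellich's argument, i.e., a legitimate self-contained substitute for the citation.
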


We also need the following basic fact:

\begin{proposition}
\label{analprop}Let $f:\mathbb{R}\rightarrow\mathbb{R}$\ be a real analytic
function. \ If there exists an open set $\left(  x,y\right)  \subset
\mathbb{R}$\ on which $f$ is constant, then $f$\ is constant everywhere.
\end{proposition}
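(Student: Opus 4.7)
The plan is to use the standard ``identity theorem'' argument for real analytic functions, based on connectedness of $\mathbb{R}$. First I would reduce to showing that the zero set has to be all of $\mathbb{R}$: replacing $f$ by $f - c$, where $c$ is the constant value of $f$ on $(x,y)$, I may assume $f$ vanishes on $(x,y)$ and I want to show $f \equiv 0$.

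Next, I would introduce the set
\[
S = \{t_0 \in \mathbb{R} : f^{(k)}(t_0) = 0 \text{ for all } k \geq 0\}.
\]
The goal is to show $S = \mathbb{R}$, for which it suffices by connectedness of $\mathbb{R}$ to show that $S$ is nonempty, closed, and open. Nonemptiness is immediate: since $f$ vanishes identically on $(x,y)$, every derivative of $f$ vanishes on $(x,y)$, so $(x,y) \subseteq S$. Closedness is also straightforward: $S = \bigcap_{k \geq 0} (f^{(k)})^{-1}(0)$, and each $f^{(k)}$ is continuous (in fact analytic), so each preimage of $\{0\}$ is closed, and $S$ is the intersection of closed sets.

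The main step is openness of $S$. If $t_0 \in S$, then by real analyticity the Taylor series of $f$ about $t_0$ converges to $f$ on some open interval $(t_0 - \delta, t_0 + \delta)$. But every Taylor coefficient $f^{(k)}(t_0)/k!$ equals zero, so the Taylor series is identically zero and hence $f \equiv 0$ on $(t_0 - \delta, t_0 + \delta)$. Therefore $f^{(k)} \equiv 0$ on this neighborhood for every $k$, which gives $(t_0 - \delta, t_0 + \delta) \subseteq S$, proving $S$ is open.

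The only genuinely analytic ingredient here is the implication ``vanishing Taylor coefficients at $t_0$ plus positive radius of convergence $\Rightarrow$ $f$ vanishes on a neighborhood of $t_0$,'' which is exactly the definition of real analyticity being used; so no step requires a delicate estimate. Since $S$ is nonempty, open, and closed, and $\mathbb{R}$ is connected, $S = \mathbb{R}$, which gives $f \equiv 0$ and hence the original $f$ is constant everywhere. I expect the only subtlety a reader might object to is conflating ``all derivatives vanish at a point'' with ``analytic function vanishes in a neighborhood'', so I would be sure to cite the hypothesis of real analyticity at precisely that step.
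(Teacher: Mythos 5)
Your proof is correct. The paper states this proposition as a known basic fact without supplying a proof, and your argument is precisely the standard identity-theorem justification: reduce to $f\equiv 0$ on $(x,y)$, consider the set $S$ of points where all derivatives of $f$ vanish, and show $S$ is nonempty, closed, and open, so that $S=\mathbb{R}$ by connectedness. Each step is handled properly, including the one place where analyticity (as opposed to mere smoothness) is genuinely used, namely that vanishing of all Taylor coefficients at a point forces $f$ to vanish on a neighborhood of that point---which is exactly the distinction the paper highlights with its $e^{-1/x^{2}}$ counterexample for smooth functions.
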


Note that Proposition \ref{analprop}\ is false with smooth functions in place
of real analytic ones.\footnote{The standard counterexample is $f\left(
x\right)  =0$\ for $x\leq0$ and $f\left(  x\right)  =e^{-1/x^{2}}$\ for
$x>0$.} \ This is why we need analyticity for our result.

\section{Result\label{RESULT}}

We first need a more-or-less standard fact (proved for completeness), which
recasts the problem of finding an optimal $\mathsf{QMA}$\ witness as a
principal eigenvalue problem.

\begin{lemma}
\label{vlem}Let $V$ be a quantum verifier that takes as input a $Q$-qubit
quantum witness $\left\vert \varphi\right\rangle $, and that makes
$T$\ queries to a quantum oracle described by a unitary matrix $U$. \ Also,
let $a\left(  U\right)  $\ be the acceptance probability of $V^{U}$ maximized
over all possible $\left\vert \varphi\right\rangle $. \ Then there exists a
$2^{Q}\times2^{Q}$\ complex-valued matrix $E\left(  U\right)  $ such that

\begin{enumerate}
\item[(i)] Every entry of $E\left(  U\right)  $\ is a polynomial in the
entries of $U$, of degree at most $2T$.

\item[(ii)] $E\left(  U\right)  $ is Hermitian for all $U$.

\item[(iii)] $a\left(  U\right)  $ equals the largest eigenvalue of $E\left(
U\right)  $, for all $U$.
\end{enumerate}
\end{lemma}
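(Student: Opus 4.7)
The plan is to set up the verifier's computation as a sequence of fixed gates alternating with oracle calls, define $E(U)$ as the partial matrix element of the acceptance projector over the ancilla register, and then read off the three conclusions directly.

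First I would write the verifier's full unitary as
$$W(U) \;=\; G_T \, \widetilde{U}_T \, G_{T-1} \, \widetilde{U}_{T-1} \cdots G_1 \, \widetilde{U}_1 \, G_0,$$
where the $G_i$ are fixed unitaries acting on $Q+k$ qubits (with $k$ ancillas initialized to $|0\rangle$) and each $\widetilde{U}_j$ is $U$ applied to some fixed subset of qubits, tensored with identity elsewhere. Let $\Pi$ denote the projector onto the accepting subspace of the designated output register. Then for any $Q$-qubit witness $|\varphi\rangle$, the acceptance probability is
$$(\langle\varphi| \otimes \langle 0^k|) \, W(U)^\dagger \, \Pi \, W(U) \, (|\varphi\rangle \otimes |0^k\rangle).$$
I would then define
$$E(U) \;:=\; (I_{2^Q} \otimes \langle 0^k|) \, W(U)^\dagger \, \Pi \, W(U) \, (I_{2^Q} \otimes |0^k\rangle),$$
so that the acceptance probability equals $\langle \varphi | E(U) | \varphi \rangle$ exactly.

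With $E(U)$ in hand, the three properties essentially fall out. Property (ii) is immediate from the $M^\dagger \Pi M$ form (which also gives positive semidefiniteness). Property (iii) follows because for unit vectors the Rayleigh quotient $\langle\varphi|E(U)|\varphi\rangle$ is maximized at a top eigenvector, so $a(U)$ equals the largest eigenvalue of $E(U)$. Property (i) is a straightforward degree count: each factor $\widetilde{U}_j$ has entries that are linear in the entries of $U$, so $W(U)$ has entries polynomial of degree at most $T$ in entries of $U$, while $W(U)^\dagger$ has entries polynomial of degree at most $T$ in entries of $\overline{U}$; multiplication and restriction to the $|0^k\rangle$ sub-block preserve polynomiality, giving total degree at most $2T$.

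There is no serious obstacle — this is a routine circuit unfolding. The only subtlety worth flagging is that "polynomial in the entries of $U$" must tacitly include entries of $\overline{U}$, since $W(U)^\dagger$ depends on $U$ via complex conjugation. That reading is exactly what the paper's subsequent application needs: when $U = U(\theta)$ depends real-analytically on a real parameter $\theta$, both $U(\theta)$ and $\overline{U(\theta)}$ are real-analytic in $\theta$, and so the entries of $E(U(\theta))$ — and hence the coefficients of its characteristic polynomial — are real-analytic functions of $\theta$, which is precisely the hypothesis needed to invoke Theorem \ref{aklmthm} later on.
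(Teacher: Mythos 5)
Your proof is correct and takes essentially the same route as the paper's: the paper packages the identical circuit unfolding as $a\left(U,\left\vert\varphi\right\rangle\right)=\sum_i\left\vert\left\langle v_i|\varphi\right\rangle\right\vert^2$ with $E=\sum_i\left\vert v_i\right\rangle\left\langle v_i\right\vert$, which is just your $\left(I\otimes\left\langle 0^k\right\vert\right)W(U)^\dagger\,\Pi\,W(U)\left(I\otimes\left\vert 0^k\right\rangle\right)$ expressed through the rows of $\Pi\,W(U)\left(I\otimes\left\vert 0^k\right\rangle\right)$, together with the same Beals et al.\ degree-at-most-one-per-query count giving degree $T$ for the vectors and $2T$ for $E$. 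Your observation that the entries are really polynomials in the entries of $U$ and $\overline{U}$ applies equally to the paper's formulation and is harmless in the application, since $U(\theta)$ has real entries $\cos\theta$ and $\sin\theta$.
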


\begin{proof}
Let $a\left(  U,\left\vert \varphi\right\rangle \right)  $\ be the acceptance
probability of $V^{U}$\ on input $\left\vert \varphi\right\rangle $. \ Then
clearly there exist vectors $\left\{  \left\vert v_{i}\right\rangle \right\}
_{i=1}^{2^{Q}}$\ (not necessarily normalized, and depending on $U$) such that%
\[
a\left(  U,\left\vert \varphi\right\rangle \right)  =\sum_{i}\left\vert
\left\langle v_{i}|\varphi\right\rangle \right\vert ^{2}.
\]
Furthermore, by an observation of Beals et al.\ \cite{bbcmw}, every entry of
every $\left\vert v_{i}\right\rangle $\ must be a polynomial in the entries of
$U$, of degree at most $T$. \ (This is because initially the entries are
degree-$0$ polynomials, and every query to the oracle can increase the degree
by at most $1$.) \ So if we set $E:=\sum_{i}\left\vert v_{i}\right\rangle
\left\langle v_{i}\right\vert $, then $E$ is a $2^{Q}\times2^{Q}$\ Hermitian
matrix, every entry of which is a polynomial of degree at most $2T$.
\ Furthermore $a\left(  U,\left\vert \varphi\right\rangle \right)
=\left\langle \varphi|E|\varphi\right\rangle $, which implies that%
\[
a\left(  U\right)  =\max_{\left\vert \varphi\right\rangle }a\left(
U,\left\vert \varphi\right\rangle \right)  =\max_{\left\vert \varphi
\right\rangle }\left\langle \varphi|E|\varphi\right\rangle
\]
which is just the largest eigenvalue of $E$.
\end{proof}

We now prove the main result.

\begin{theorem}
\label{osep}There exists a quantum oracle $\mathcal{U}$ such that
$\mathsf{QMA}_{1}^{\mathcal{U}}\neq\mathsf{QMA}^{\mathcal{U}}$.
\end{theorem}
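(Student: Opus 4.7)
The plan is to take $\mathcal{U} = \{U_n\}_{n \geq 1}$ to consist of single-qubit unitaries, each either the identity $I$ or a rotation $R(\theta_n)$ with $\theta_n \in [\pi/4, \pi/2]$, and to consider the language $L_{\mathcal{U}} = \{1^n : U_n \neq I\}$. The easy direction $L_{\mathcal{U}} \in \mathsf{BQP}^{\mathcal{U}} \subseteq \mathsf{QMA}^{\mathcal{U}}$ follows from the obvious distinguisher: Arthur applies $U_n$ to $|0\rangle$ and measures, seeing outcome $1$ with probability $\sin^{2}\theta_n \geq 1/2$ in the yes case and with probability $0$ in the no case, so polynomially many repetitions plus a majority vote yield bounded error.

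The hard direction is $L_{\mathcal{U}} \notin \mathsf{QMA}_{1}^{\mathcal{U}}$, which I would obtain by diagonalizing against a countable enumeration $M_1, M_2, \ldots$ of $\mathsf{QMA}_1$ verifiers (countable because the gate basis is assumed countable). At stage $i$ I would allocate a fresh input length $n_i$ larger than anything touched earlier and larger than the query budgets of $M_1, \ldots, M_i$; freeze $U_m := I$ at every uncommitted $m$ queried by $M_i$ on $1^{n_i}$; and choose $U_{n_i}$ to defeat $M_i$ on $1^{n_i}$. Any remaining coordinates of $\mathcal{U}$ can be set to $I$ at the end.

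The content is in the choice of $U_{n_i}$. Define $a(\theta)$ to be the maximum acceptance probability of $M_i$ on $1^{n_i}$ over all witnesses under the substitution $U_{n_i} := R(\theta)$, with other oracle values as frozen above. By Lemma \ref{vlem}, $a(\theta)$ equals the top eigenvalue of a Hermitian matrix $E(\theta)$ whose entries are polynomials in $\cos\theta$ and $\sin\theta$, hence real analytic in $\theta$; its characteristic polynomial therefore has real analytic coefficients and, $E(\theta)$ being Hermitian, only real roots, so Theorem \ref{aklmthm} supplies globally real analytic functions $\lambda_1, \ldots, \lambda_N : \mathbb{R} \to \mathbb{R}$ enumerating the eigenvalues of $E(\theta)$ for every $\theta$. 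Suppose for contradiction that $M_i$ correctly decides $L_{\mathcal{U}}$ for every $\theta \in \{0\} \cup [\pi/4, \pi/2]$: then perfect completeness forces $\max_j \lambda_j(\theta) = 1$ on $[\pi/4, \pi/2]$, while soundness at $\theta = 0$ (where $R(0) = I$) forces $a(0) \leq 1/2$. The closed sets $S_j := \{\theta \in (\pi/4, \pi/2) : \lambda_j(\theta) = 1\}$ cover the open interval $(\pi/4, \pi/2)$, so by Baire's theorem some $S_{j^*}$ has nonempty interior; Proposition \ref{analprop} then forces $\lambda_{j^*} \equiv 1$ on $\mathbb{R}$, giving $a(0) \geq \lambda_{j^*}(0) = 1 > 1/2$, the desired contradiction. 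Hence some $\theta^* \in \{0\} \cup [\pi/4, \pi/2]$ defeats $M_i$, and I would set $U_{n_i} := R(\theta^*)$ (reading $R(0) = I$) and pass to $M_{i+1}$.

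I expect the main obstacle to be the analytic eigenvalue step: ordinary perturbation theory produces real analytic eigenvalue branches only away from crossings, and it is precisely the global real-analyticity guaranteed by Theorem \ref{aklmthm} that lets Proposition \ref{analprop} propagate the plateau $a(\theta) = 1$ on the interval $[\pi/4, \pi/2]$ into a contradiction at $\theta = 0$. The remaining ingredients---the $\mathsf{BQP}$-style upper bound, the reduction of maximum acceptance probability to a top-eigenvalue problem via Lemma \ref{vlem}, and the bookkeeping of the diagonalization---are entirely standard.
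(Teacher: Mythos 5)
Your proposal is correct and follows essentially the same route as the paper's own proof: the same rotation-oracle construction (identity versus a rotation bounded away from zero), the same use of Lemma \ref{vlem}, Theorem \ref{aklmthm}, and Proposition \ref{analprop} to rule out a perfect-completeness verifier, and the same diagonalization over $\mathsf{QMA}_1$ machines. The only differences are cosmetic (the interval $[\pi/4,\pi/2]$ in place of $[1,2]$, and invoking Baire's theorem where the paper just notes that finitely many closed sets covering an interval must include one with nonempty interior).
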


\begin{proof}
Let $\theta$\ be a real number, and let $U=U\left(  \theta\right)  $\ be the
one-qubit unitary transformation%
\[
\left(
\begin{array}
[c]{cc}%
\cos\theta & -\sin\theta\\
\sin\theta & \cos\theta
\end{array}
\right)  .
\]
Given oracle access to $U$, we consider the problem of deciding whether
$\theta=0$ (the NO case)\ or $1\leq\theta\leq2$\ (the YES case), promised that
one of these holds. \ Of course this problem is easily solved by a quantum
computer, with bounded error probability, using $O\left(  1\right)  $\ queries
to $U$. \ On the other hand, we will show that this problem does not admit a
perfect-completeness $\mathsf{QMA}$\ protocol, with any finite number of
queries to $U$ and any finite-sized quantum proof.

To see this, let $V$ be a verifier, let $T$\ be the number of queries that $V$
makes to $U$,\ and let $Q$\ be the number of qubits in $V$'s quantum witness.
\ Also, let $a\left(  \theta\right)  $\ be the acceptance probability of $V$
assuming $U=U\left(  \theta\right)  $, maximized over all $Q$-qubit quantum
witnesses $\left\vert \varphi\right\rangle $. \ Then by Lemma \ref{vlem},
there exists a $2^{Q}\times2^{Q}$\ complex-valued matrix $E\left(
\theta\right)  $\ such that

\begin{enumerate}
\item[(i)] Every entry of $E\left(  \theta\right)  $\ is a polynomial in
$\cos\theta$ and $\sin\theta$, of degree at most $2T$.

\item[(ii)] $E\left(  \theta\right)  $ is Hermitian for all $\theta
\in\mathbb{R}$.

\item[(iii)] $a\left(  \theta\right)  $ equals the largest eigenvalue of
$E\left(  \theta\right)  $, for all $\theta\in\mathbb{R}$.
\end{enumerate}

Let $N=2^{Q}$, and let $\lambda_{1}\left(  \theta\right)  ,\ldots,\lambda
_{N}\left(  \theta\right)  $\ be the eigenvalues of $E\left(  \theta\right)
$. \ Then the $\lambda_{i}\left(  \theta\right)  $'s are roots of a degree-$N$
characteristic polynomial parameterized by $\theta$:%
\[
p\left(  \theta\right)  \left(  x\right)  =b_{0}\left(  \theta\right)
+b_{1}\left(  \theta\right)  x+b_{2}\left(  \theta\right)  x^{2}+\cdots
+b_{N}\left(  \theta\right)  x^{N}.
\]
Each coefficient $b_{j}\left(  \theta\right)  $\ is a polynomial in the
entries of $E\left(  \theta\right)  $\ of degree at most $N$, and hence, by
(i), a polynomial in $\cos\theta$ and $\sin\theta$\ of degree at most $2TN$.
\ By (ii), the $\lambda_{i}\left(  \theta\right)  $'s\ are all real, and
therefore the $b_{j}\left(  \theta\right)  $'s must be real as well for all
$\theta$. \ Combining these facts, we find that each $b_{j}\left(
\theta\right)  $ is a real analytic function of $\theta$\ (for note that
$\cos\theta$ and $\sin\theta$\ are real analytic functions, and real analytic
functions are closed under composition). \ By Theorem \ref{aklmthm}, then, we
can take $\lambda_{1}\left(  \theta\right)  ,\ldots,\lambda_{N}\left(
\theta\right)  $\ to be real analytic\ functions as well.

By (iii), the acceptance probability $a\left(  \theta\right)  $ of $V$
(maximized over all witnesses) is equal to $\max_{i}\lambda_{i}\left(
\theta\right)  $. \ If $V$\ is a valid $\mathsf{QMA}_{1}$\ verifier, then we
must have $a\left(  0\right)  \leq1/2$, but $a\left(  \theta\right)  =1$\ for
all real $1\leq\theta\leq2$. \ Since $N$ is finite and the $\lambda_{i}\left(
\theta\right)  $'s are continuous, this implies that there exists an
$i\in\left[  N\right]  $ such that $\lambda_{i}\left(  0\right)  \leq1/2$, but
$\lambda_{i}\left(  \theta\right)  =1$\ for all $\theta$\ in some open
interval $\left(  x,y\right)  \subset\left[  1,2\right]  $. \ But this
contradicts the analyticity of $\lambda_{i}$ by Proposition \ref{analprop}.
\ Hence there must be a choice of\ $\theta$ such that $V$ does not solve the
problem correctly given $U=U\left(  \theta\right)  $\ as oracle.

We now simply diagonalize over all $n$ to achieve the desired oracle
separation. \ More formally,\ let $\mathcal{U}$ be a collection of quantum
oracles $U_{1},U_{2},\ldots$, such that $U_{n}=U\left(  \theta_{n}\right)
$\ rotates by the angle $\theta_{n}\in\left\{  0\right\}  \cup\left[
1,2\right]  $. \ Also, let $L$\ be a unary language such that $0^{n}\in L$\ if
and only if $\theta_{n}\neq0$. \ Then clearly $L\in\mathsf{BQP}^{\mathcal{U}}%
$, and hence $L\in\mathsf{QMA}^{\mathcal{U}}$, for all choices of
$\mathcal{U}$. \ On the other hand, we claim that $\mathcal{U}$\ can be chosen
so that $L\notin\mathsf{QMA}_{1}^{\mathcal{U}}$. \ To see this, let
$M_{1},M_{2},\ldots$\ be an enumeration of $\mathsf{QMA}_{1}$\ machines.
\ Then for each $i$, we simply choose an $n_{i}$ so large that $U_{n_{i}}$
cannot have been queried by machines $M_{1},\ldots,M_{i-1}$, and then set
$\theta_{n_{i}}$\ so that $M_{i}$\ fails on input $0^{n_{i}}$. \ (In other
words, either $\theta_{n_{i}}=0$\ and there exists a witness $\left\vert
\varphi\right\rangle $\ causing $M_{i}$\ to accept with probability greater
than $1/2$, or else $\theta_{n_{i}}\in\left[  1,2\right]  $\ and no witness
causes $M_{i}$\ to accept with probability $1$.) \ This is clearly possible by
the argument above.
\end{proof}

Notice that the proof of Theorem \ref{osep} breaks down if either $T$\ (the
number of queries to the unitary $U$) or $Q$ (the size of the witness) is
infinite. \ This is not an accident. \ If $T$ is infinite, then a quantum
algorithm can determine $\theta$\ exactly, with no need for a witness. \ If
$Q$ is infinite, then the witness $\left\vert \varphi\right\rangle $\ can
describe $\theta$\ to infinite precision, and verifying the description (with
perfect completeness) requires just a single query to $U$.

\section{Discussion\label{DISC}}

Perhaps the strangest aspect of Theorem \ref{osep} is its lack of dependence
on the polynomial running time of the $\mathsf{QMA}_{1}$\ machine. \ For
example, the same argument gives a quantum oracle $\mathcal{U}$\ such that
$\mathsf{BQP}^{\mathcal{U}}\not \subset \mathsf{QMA}_{1}\mathsf{EXP}%
^{\mathcal{U}}$, where $\mathsf{QMA}_{1}\mathsf{EXP}$\ is the exponential-time
version of $\mathsf{QMA}_{1}$, and even $\mathsf{BQP}^{\mathcal{U}%
}\not \subset \mathsf{QMA}_{1}\mathsf{EEEXP}^{\mathcal{U}}$. \ Indeed, just by
using Proposition \ref{analprop} about real analytic functions, without Lemma
\ref{vlem}\ or the theorem of Alekseevsky et al.\ \cite{aklm}, one can
construct a quantum oracle $\mathcal{U}$\ such that (for example)
$\mathsf{BQP}^{\mathcal{U}}\not \subset \mathsf{ZQEXP}^{\mathcal{U}}$, where
$\mathsf{ZQEXP}$\ is Zero-Error Quantum Exponential-Time.\footnote{If we just
want to separate $\mathsf{BQP}$\ from $\mathsf{ZQP}$ (Zero-Error Quantum
Polynomial-Time), this can be done with an ordinary classical oracle. \ Indeed
we can easily construct an oracle $A$ such that $\mathsf{BPP}^{A}%
\not \subset \mathsf{ZQP}^{A}$, by considering a problem where the answer is
YES if $A\left(  y\right)  =1$\ for most $y\in\left\{  0,1\right\}  ^{n}$, or
NO if $A\left(  y\right)  =0$\ for most $y\in\left\{  0,1\right\}  ^{n}$.
\ Such a problem is trivially in $\mathsf{BPP}^{A}$, but can be shown not to
be in $\mathsf{ZQP}^{A}$\ using the polynomial method of Beals et
al.\ \cite{bbcmw}. \ It would be nice if the same trick gave us a classical
oracle $A$ such that $\mathsf{BPP}^{A}\not \subset \mathsf{QMA}_{1}^{A}$, but
of course it does not, since the result of Zachos and F\"{u}rer \cite{zf}
(which is relativizing) implies that $\mathsf{BPP}^{A}\subseteq\mathsf{MA}%
^{A}=\mathsf{MA}_{1}^{A}\subseteq\mathsf{QMA}_{1}^{A}$ for all $A$.}

What makes this strange is that we know, by trivial relativizing arguments,
that $\mathsf{BQP}\subseteq\mathsf{EXP}\subseteq\mathsf{ZQEXP}$. \ Reflecting
on the apparent contradiction, one might suspect that the quantum oracle
separations are \textquotedblleft cheating\textquotedblright\ somehow. \ But
this is not the case; the correct resolution is simply that results like
$\mathsf{BQP}\subseteq\mathsf{ZQEXP}$, while classically relativizing, must be
quantumly non-relativizing! \ But how could that be?

If we carefully write out a proof that $\mathsf{BQP}\subseteq\mathsf{ZQEXP}$,
we see what the problem is. \ Since $\mathsf{ZQEXP}$\ is a zero-error class,
the \textquotedblleft obvious\textquotedblright\ proof will have to proceed
not by direct simulation of the $\mathsf{BQP}$\ machine, but by representing
the amplitudes of the $\mathsf{BQP}$\ machine in some explicit way. \ (In
other words, by mimicking the proofs of containments such as $\mathsf{BQP}%
\subseteq\mathsf{EXP}$ or $\mathsf{BQP}\subseteq\mathsf{PSPACE}$ \cite{bv}.)
\ But the technique of explicitly representing amplitudes, simple though it
seems, is already quantumly non-relativizing: it can break down if there is a
quantum oracle $\mathcal{U}$, some property of which must be decided without error!

Some readers might conclude from this that quantum oracles are
illegitimate;\ others, that\ the whole problem comes from the introduction of
one-sided-error quantum complexity classes like $\mathsf{QMA}_{1}$. \ Our own
view is that questions of \textquotedblleft complexity-theoretic
legitimacy\textquotedblright\ need to be decided on a case-by-case basis. \ In
the present case, the real substance of our result is that any proof of
$\mathsf{QMA}_{1}=\mathsf{QMA}$\ will need to involve explicit representation
of amplitudes (or something similar), rather than just black-box composition
of quantum circuits.

It remains a major challenge to find a quantumly non-relativizing technique
that both (i) goes beyond the known classically non-relativizing techniques
such as arithmetization, and (ii) fails to relativize even with
two-sided-error complexity classes.

\section{Extensions and Open Problems\label{EXT}}

By analogy to our quantum oracle separating $\mathsf{QMA}_{1}$\ from
$\mathsf{QMA}$,\ one might ask whether it is possible to construct a
\textquotedblleft randomized oracle\textquotedblright\ $R$ separating
$\mathsf{MA}_{1}$ from $\mathsf{MA}$.\footnote{It is also interesting to see
why a classical version of our argument does \textit{not} yield an ordinary
classical oracle $A$ such that $\mathsf{MA}_{1}^{A}\neq\mathsf{MA}^{A}$,
thereby contradicting the result of Zachos and F\"{u}rer \cite{zf} (which is
relativizing). \ The answer turns out to involve the fact that in the
classical case, Merlin can take advantage of the individual oracle bits,
rather than just the total amplitude for a `$1$' outcome. \ To put it another
way: in the classical case, there is no such thing as an oracle that is both
continuous and deterministic.} \ This would show that the proof of
$\mathsf{MA}_{1}=\mathsf{MA}$\ due to\ Zachos and F\"{u}rer \cite{zf} must
have been \textquotedblleft randomly non-relativizing.\textquotedblright%
\ \ Indeed such a randomized oracle separation is possible: simply have $R$
either output $0$ whenever it is queried (the NO case), or else output $0$ or
$1$ with equal probability (the YES\ case). \ It is obvious that these two
cases can be distinguished by a $\mathsf{BPP}^{R}$\ machine, using $O\left(
1\right)  $\ queries to $R$. \ On the other hand, because of the perfect
completeness requirement, the two cases \textit{cannot} be distinguished by an
$\mathsf{MA}_{1}^{R}$\ machine: having a witness in support of the YES case
clearly makes no difference.

However, this classical counterpart of our result really \textit{does} feel
like cheating! \ With the randomized oracle, perfect completeness is
unachievable for trivial information-theoretic reasons, even assuming an
infinitely long $\mathsf{MA}$\ witness. \ With the quantum oracle, by
contrast, perfect completeness \textit{would} be achievable, if there were
only some way to specify $\theta$\ to infinite precision using the quantum
witness $\left\vert \varphi\right\rangle $. \ This is of course what Theorem
\ref{osep} rules out.

The above discussion immediately suggests another question. \ In constructing
the quantum oracle $\mathcal{U}$, can we ensure that the angles $\theta_{n}$
are all rational numbers (or belong to some other dense countable set)?
\ Indeed, the proof of Theorem \ref{osep}\ can easily be modified to achieve
this. \ This is because of the following extension of Proposition
\ref{analprop}:

\begin{proposition}
Given a real analytic function $f:\mathbb{R}\rightarrow\mathbb{R}$, if there
exists an open set $\left(  x,y\right)  \subset\mathbb{R}$\ such that
$f\left(  z\right)  =1$\ for all rational points $z\in\left(  x,y\right)  $,
then $f\left(  z\right)  =1$\ identically.
\end{proposition}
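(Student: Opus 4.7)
The plan is to reduce the statement to Proposition \ref{analprop} via a density-plus-continuity argument. The rationals are dense in $\mathbb{R}$, so every point of the open interval $(x,y)$ is a limit of rational points in $(x,y)$. Since $f$ is real analytic, it is in particular continuous, so from $f(z)=1$ for every rational $z\in(x,y)$ I can pass to the limit and conclude $f(z)=1$ for every (possibly irrational) $z\in(x,y)$.

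At that stage the function $g(z):=f(z)-1$ is a real analytic function that vanishes on the entire open set $(x,y)$. Applying Proposition \ref{analprop} to $g$ (which is constant, namely $0$, on $(x,y)$) yields that $g$ is constant everywhere, hence identically $0$, i.e.\ $f\equiv 1$ on $\mathbb{R}$.

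There is essentially no obstacle: the whole content is that denseness of the rationals together with continuity upgrades equality on a dense subset of $(x,y)$ to equality on all of $(x,y)$, after which Proposition \ref{analprop} does the rest. The same argument clearly works if \textquotedblleft rational\textquotedblright\ is replaced by any other countable dense subset of $\mathbb{R}$, which is exactly the strengthening the paper wants in order to force the oracle angles $\theta_n$ to lie in a prescribed countable dense set.
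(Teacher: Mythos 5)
Your argument is correct: density of the rationals plus continuity of the analytic $f$ gives $f\equiv 1$ on $\left(x,y\right)$, and then Proposition \ref{analprop} applied to $f-1$ finishes the job. The paper states this proposition without proof, and your density-plus-continuity reduction is exactly the intended (and essentially only) argument, so nothing further is needed.
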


However, there is an interesting difference between the real and rational
cases. \ In the case where the $\theta_{n}$'s are real, it is possible to
construct a \textit{single} quantum oracle $\mathcal{U}$\ such
that\ $\mathsf{BQP}^{\mathcal{U}}\not \subset \mathsf{QMA}_{1}\mathsf{TIME}%
\left(  f\left(  n\right)  \right)  ^{\mathcal{U}}$ for all functions $f$.
\ For example, choosing each $\theta_{n}$ to be $0$ with probability $1/2$, or
uniformly distributed in $\left[  1,2\right]  $\ with probability $1/2$, will
yield such a $\mathcal{U}$\ with probability $1$, by an argument due to
Bennett and Gill \cite{bg}. \ In the rational case, such a strong separation
is also achievable, but only by choosing the numerator and denominator of each
rational number $\theta_{n}$\ to grow faster than any computable function of
$n$. \ If we sidestep the issue of computability, say by giving the function
$f\left(  n\right)  $\ to the $\mathsf{QMA}_{1}\mathsf{TIME}\left(  f\left(
n\right)  \right)  $\ machine as advice, then it is not hard to show the following:

\begin{claim}
Given any quantum oracle $\mathcal{U}=\left\{  U_{n}\right\}  _{n\geq1}$\ with
rational angles $\left\{  \theta_{n}\right\}  _{n\geq1}$,\ there exists a
function $f$ such that $\mathsf{BQP}^{\mathcal{U}}$\ is simulable by a
zero-error quantum algorithm that makes $f\left(  n\right)  $\ queries to
$\mathcal{U}$.
\end{claim}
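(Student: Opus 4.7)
The plan is to exhibit, for each language $L \in \mathsf{BQP}^{\mathcal{U}}$, a zero-error quantum algorithm for $L$ whose query complexity depends only on $\mathcal{U}$. The algorithm first pins down $\theta_n$ exactly using oracle queries, and then simulates the $\mathsf{BQP}^{\mathcal{U}}$ machine internally, substituting its own copy of the gate $U(\theta_n)$ for any further oracle calls. Because $f$ is permitted to depend on $\mathcal{U}$, I assume that the denominator $q_n$ of $\theta_n$ in lowest terms is known in advance.

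The core subroutine is a candidate-elimination procedure. Enumerate the set $S_n$ of all rationals $\theta^{\ast}\in[0,2\pi)$ with denominator at most $q_n$, so that $|S_n|=O(q_n^2)$ and $\theta_n\in S_n$. For each $\theta^{\ast}\in S_n$, repeat the following test $k=\Theta(q_n^4\log q_n)$ times: prepare $|0\rangle$, apply one oracle query $U_n=U(\theta_n)$, apply the internal gate $U(-\theta^{\ast})$, and measure in the standard basis. The net rotation is $U(\theta_n-\theta^{\ast})$, which equals the identity precisely when $\theta^{\ast}=\theta_n$. If any of the $k$ trials yields $|1\rangle$, declare $\theta^{\ast}$ \emph{eliminated}. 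Crucially, $\theta_n$ is never eliminated, since $U(0)$ fixes $|0\rangle$. Any wrong $\theta^{\ast}$ satisfies $|\theta_n-\theta^{\ast}|\geq 1/q_n^2$ (the standard gap for distinct rationals with denominator at most $q_n$), giving a per-trial elimination probability $\sin^2(\theta_n-\theta^{\ast})=\Omega(1/q_n^4)$. The choice of $k$ makes the survival probability of each wrong candidate at most $1/(2q_n^2)$, and a union bound over the $O(q_n^2)$ wrong candidates leaves $\{\theta_n\}$ as the sole surviving set with probability at least $1/2$.

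If the surviving set has size one, I commit: substitute the identified $\theta_n$ into the $\mathsf{BQP}^{\mathcal{U}}$ machine (implementing each of its queries by the internal gate $U(\theta_n)$), invoke Lemma \ref{vlem} to express the acceptance probability as an explicit polynomial in $\cos\theta_n$ and $\sin\theta_n$, and evaluate that polynomial to precision $1/6$ by brute-force classical simulation. By the $\mathsf{BQP}$ promise the value lies in $[0,1/3]\cup[2/3,1]$, so precision $1/6$ suffices to output the correct answer with certainty. If more than one candidate survives, output ``don't know''. The overall query complexity is $f(n)=O(q_n^6\log q_n)$.

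The key point to verify---and the one I expect to be the main conceptual obstacle---is \emph{zero} error, as opposed to merely small error. It hinges on the asymmetry of the test: $\theta_n$ itself is never eliminated, so any singleton surviving set must equal $\{\theta_n\}$, forcing the committed answer to be correct. A wrong $\theta^{\ast}$ can never be the unique survivor, because $\theta_n$ is sitting right there beside it. Everything else---the $1/(qq')$ separation between distinct rationals, the union bound calibrating $k$, and the classical post-processing in the commitment step---is routine. The fact that the final classical simulation runs in exponential time is harmless, since the claim bounds only query complexity.
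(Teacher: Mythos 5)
The paper actually states this claim without proof (``it is not hard to show''), so there is no official argument to compare against; your eliminate-and-commit procedure is the natural one, and its core logic is sound: the true angle is never eliminated, so a singleton surviving set forces a correct commitment, and outputting ``don't know'' otherwise yields genuine zero error in the $\mathsf{ZQP}$ sense. Three points need repair, none fatal. First, the per-trial bound $\sin^{2}\left(  \theta_{n}-\theta^{\ast}\right)  =\Omega\left(  1/q_{n}^{4}\right)  $ is wrong as stated: the spacing bound $\left\vert \theta_{n}-\theta^{\ast}\right\vert \geq1/q_{n}^{2}$ controls only the distance from $0$, but $\sin^{2}$ also vanishes at $\pi$, and a wrong candidate can lie much closer to $\theta_{n}\pm\pi$ than $1/q_{n}^{2}$ (e.g.\ if $\theta_{n}-\theta^{\ast}=355/113$ then $\sin^{2}\left(  \theta_{n}-\theta^{\ast}\right)  \approx7\times10^{-14}\ll113^{-4}$); a uniform polynomial bound would require the finite irrationality measure of $\pi$, but since $f$ may depend arbitrarily on $\mathcal{U}$ you can instead calibrate $k$ against $\epsilon_{n}=\min_{\theta^{\ast}\neq\theta_{n}}\sin^{2}\left(  \theta_{n}-\theta^{\ast}\right)  >0$ (positive because a nonzero rational is never a nonzero integer multiple of $\pi$), at the price of abandoning the explicit estimate $f\left(  n\right)  =O\left(  q_{n}^{6}\log q_{n}\right)  $. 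Second, zero error hinges on the true angle actually belonging to your candidate set: if $\theta_{n}$ is a rational outside $\left[  0,2\pi\right)  $, then $S_{n}$ misses it, every candidate has positive elimination probability, and a unique \emph{wrong} survivor becomes possible, which would break zero error; either restrict attention (as in the paper's construction, where $\theta_{n}\in\left\{  0\right\}  \cup\left[  1,2\right]  $) to angles in $\left[  0,2\pi\right)  $, or let the advice bound the numerator as well as the denominator. Third, on an input of length $n$ the $\mathsf{BQP}^{\mathcal{U}}$ machine may query $U_{m}$ for every $m$ up to its running time, so the identification subroutine must be run for all such $m$, not only $m=n$; this again only inflates $f$. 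With these adjustments your argument is complete.
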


We end with two open problems. \ First, suppose the rotation angle $\theta
_{n}$\ cannot assume a continuum of values, but only a large finite set of
values $S_{n}$. \ Is it then the case that either $T$\ must scale like
$\left\vert S_{n}\right\vert ^{\Omega\left(  1\right)  }$\ or $Q$\ must scale
like $\Omega\left(  \log\left\vert S_{n}\right\vert \right)  $? \ What is the
optimal tradeoff between $T$ and $Q$? \ Are quantum witnesses (in this
setting) ever more powerful than classical witnesses of comparable size?

Second, can we prove a \textit{classical} oracle separation between
$\mathsf{QMA}_{1}$\ and $\mathsf{QMA}$?

\section{Acknowledgments}

I thank Andy Drucker and the anonymous reviewers for helpful comments, and
Greg Kuperberg and Dave Xiao for discussions of related problems several years ago.

\bibliographystyle{plain}
\bibliography{thesis}

\end{document}